\def\@email#1#2{%
 \endgroup
 \patchcmd{\titleblock@produce}
  {\frontmatter@RRAPformat}
  {\frontmatter@RRAPformat{\produce@RRAP{*#1\href{mailto:#2}{#2}}}\frontmatter@RRAPformat}
  {}{}
}%
\newtheorem{theorem}{Theorem}
\newtheorem{definition}[theorem]{Definition}
\newtheorem{proposition}[theorem]{Proposition}
\newenvironment{proof}[1][Proof]{\noindent\textbf{#1.} }{\ \rule{0.5em}{0.5em}}
\begin{document}

\title{An Exact Formula for Quantum Entropy Production along Quantum Trajectories}

\author{John E.~Gough} \email{jug@aber.ac.uk}
   \affiliation{Aberystwyth University, SY23 3BZ, Wales, United Kingdom}

\author{Nina H. Amini} \email{nina.amini@l2s.centralesupelec.fr}
   \affiliation{Laboratoire des Signaux et Syst\'{e}mes (L2S), CNRS-CentraleSup\'{e}lec-Universit\'{e} Paris-Sud, Universit\'{e} Paris-Saclay, 3, Rue Joliot Curie, 91190, Gif-sur-Yvette, France.}

\begin{abstract}
We give an exact formula for the rate of change of the von Neumann entropy for the conditional state of a quantum system undergoing continuous measurement. Here we employ Paycha's Formula \cite{Paycha} which gives the noncommutative Taylor series development.
\end{abstract}

\maketitle

\section{Introduction}
The acquisition of information about a system from observations leads to a conditioning of prior probability distributions. In continuous time, the update of the state in this way is known as filtering and it is possible to study the corresponding entropy and make connections with thermodynamic concepts, see for instance Mitter and Newton \cite{Mitter_Newton}. here we examine the quantum analogue. The dynamical equation for the conditional density matrix of a quantum system undergoing continuous measurement takes the form \cite{Belavkin,BvHJ}
\begin{eqnarray}
    d \widehat{\rho_t} = \mathcal{L}^\star (   \widehat{\rho}_t ) \, dt + \sqrt{\eta} \, \big( L \widehat{\rho}_t  +  \widehat{\rho}_t L^\ast - \lambda_t  \widehat{\rho}_t \big) \, dI (t),
\label{eq:SME}
\end{eqnarray}
where $\mathcal{L} (\cdot ) $ is a generator of Lindblad type, $ \mathcal{L} (\cdot ) = - i[ \cdot , H ] + \sum_k \frac{1}{2}[L_k^\ast , \cdot ] L_k + \frac{1}{2} L_k^\ast [ \cdot , L_k]$, and $\mathcal{L} (\cdot ) ^\star$ is its adjoint (i.e., $\mathrm{tr} \{ \rho \, \mathcal{L} (X) \}\equiv \mathrm{tr} \{  \mathcal{L}^\star (\rho) \, X \}$ for all trace-class $\rho$ and bounded $X$), $H$ is the Hamiltonian of the system, $L$ is the coupling operator of the system to the monitored field (the \textit{collapse operator}). The term $I (t)$ is the \textit{innovations} (statistically a Wiener process) and $\lambda_t = \mathrm{tr} \{ \widehat{\rho}_t \, (L +L^\ast ) \}$.

The factor $\eta \in [0,1]$ is the \emph{efficiency} of the homodyne detector: $\eta = 1$ is perfect detection, while $\eta = 0$ effectively means that we are left with a dissipative dynamics described by a standard master equation.

The conditional state $\widehat{\rho}_t$ is a stochastic process adapted to the filtration generated by the measurements (or equivalently, the innovations process). Our interest shall be in the associated von Neumann entropies $\mathsf{S} \big(\widehat{\rho}_t \big) = - \mathrm{tr}_S \big( \widehat{\rho}_t \ln \widehat{\rho}_t \big) $ and this will likewise be an adapted stochastic process.

We recall that a state state is pure if and only if it is a rank-one projection; otherwise it is mixed. A mixed state is deemed to be \textit{faithful} if its density matrix $\rho$ is strictly positive. In such cases we may define its inverse $\frac{1}{\rho}$.

Our main result is an exact equation for the rate of change of the von Neumann entropy for the conditioned state.

\begin{theorem}
\label{thm:main}
On the assumption that the conditional state remains faithful, the von Neumann entropy for the conditioned state satisfies
 satisfies the equation
\begin{eqnarray}
    d \mathsf{S} \big(\widehat{\rho}_t \big)
    =
    - \mathrm{tr}_s \bigg(  \widehat{\rho}_t \,  \mathcal{L} \ln \widehat{\rho} _t \bigg) \, dt + \Sigma ( \widehat{\rho} _t) \, dt -    
     \mathrm{tr}_s \bigg(  \big( L \widehat{\rho} _t+  \widehat{\rho} _t L^\ast - \lambda (t) \widehat{\rho} _t  \big)
 \, \ln  \widehat{\rho}  _t\bigg) \, d I.
\label{eq:dS}
\end{eqnarray}
where
\begin{eqnarray}
    \Sigma (\widehat{\rho}) = \eta 
    \sum_{k_1, k_2=0}^\infty
     \dfrac{ 1  } {(k_1 + 1)  (k_1 + k_2 + 2)  }  \binom{k_1+k_2 }{ k_1} \,
   \, \mathrm{tr}_s \bigg\{
\big( - \frac{1}{\widehat{\rho} }\big)^{k_1+k_2 +1} 
\mathrm{ad}^{k_1}_{\widehat{\rho} } ( L \widehat{\rho}  +  \widehat{\rho}  L^\ast ) \,  \mathrm{ad}^{k_2}_{ \widehat{\rho } } ( L \widehat{\rho}  +  \widehat{\rho}  L^\ast)
    \bigg\} .
		\label{eq:Sigma}
\end{eqnarray}
\end{theorem}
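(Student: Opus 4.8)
The plan is to regard $\mathsf{S}(\rho)=-\mathrm{tr}_s(\rho\ln\rho)$ as a smooth real-valued function on the open cone of faithful density operators and to apply the It\^{o} formula to the semimartingale $\widehat{\rho}_t$ solving \eqref{eq:SME}. Because the innovations $I(t)$ is a scalar Wiener process ($dI^2=dt$, $dt\,dI=dt^2=0$), It\^{o}'s rule reads $d\mathsf{S}(\widehat{\rho}_t)=\mathsf{S}'(\widehat{\rho}_t)[d\widehat{\rho}_t]+\tfrac12\,\mathsf{S}''(\widehat{\rho}_t)[B_t,B_t]\,dt$, with $\mathsf{S}',\mathsf{S}''$ the first and second Fr\'{e}chet derivatives of $\mathsf{S}$ and $B_t=\sqrt{\eta}\,(L\widehat{\rho}_t+\widehat{\rho}_tL^\ast-\lambda_t\widehat{\rho}_t)$ the noise coefficient of \eqref{eq:SME}; only the noise--noise contraction survives in the quadratic term. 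The whole computation then reduces to evaluating $\mathsf{S}'$ and $\mathsf{S}''$ and substituting the coefficients of \eqref{eq:SME}.

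\textbf{The first-order term.} One has $\mathsf{S}'(\rho)[X]=-\mathrm{tr}_s(X\ln\rho)-\mathrm{tr}_s(X)$, where the last summand comes from the elementary identity $\mathrm{tr}_s(\rho\,D\ln\rho[X])=\mathrm{tr}_s(X)$. Trace preservation under \eqref{eq:SME} gives $\mathrm{tr}_s(d\widehat{\rho}_t)=0$, so the leftover $-\mathrm{tr}_s(d\widehat{\rho}_t)$ vanishes and $\mathsf{S}'(\widehat{\rho}_t)[d\widehat{\rho}_t]=-\mathrm{tr}_s(\ln\widehat{\rho}_t\,d\widehat{\rho}_t)$. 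Substituting the drift and noise parts of \eqref{eq:SME} and using the adjoint relation $\mathrm{tr}_s\{\ln\widehat{\rho}_t\,\mathcal{L}^\star(\widehat{\rho}_t)\}=\mathrm{tr}_s\{\widehat{\rho}_t\,\mathcal{L}(\ln\widehat{\rho}_t)\}$ then reproduces the first term $-\mathrm{tr}_s(\widehat{\rho}_t\,\mathcal{L}\ln\widehat{\rho}_t)\,dt$ of \eqref{eq:dS} and its $dI$-martingale term.

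\textbf{The second-order term.} From the integral representation $D\ln\rho[Y]=\int_0^\infty(\rho+s)^{-1}Y(\rho+s)^{-1}\,ds$ one obtains $\mathsf{S}''(\rho)[X,Y]=-\int_0^\infty\mathrm{tr}_s\big(X(\rho+s)^{-1}Y(\rho+s)^{-1}\big)\,ds$, hence the It\^{o} correction is $\Sigma(\widehat{\rho}_t)\,dt$ with $\Sigma(\widehat{\rho})=-\tfrac{\eta}{2}\int_0^\infty\mathrm{tr}_s\big(X_{\widehat{\rho}}(\widehat{\rho}+s)^{-1}X_{\widehat{\rho}}(\widehat{\rho}+s)^{-1}\big)\,ds$, where $X_{\widehat{\rho}}=L\widehat{\rho}+\widehat{\rho}L^\ast-\lambda\widehat{\rho}$. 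To resum this into \eqref{eq:Sigma} I would apply Paycha's formula \cite{Paycha}, which supplies the noncommutative Taylor development of $\epsilon\mapsto\ln(\widehat{\rho}+\epsilon X)$ (equivalently of $\epsilon\mapsto\mathrm{tr}_s\big((\widehat{\rho}+\epsilon X)\ln(\widehat{\rho}+\epsilon X)\big)$): its $n$-th order term is a finite sum over compositions carrying nested adjoint actions $\mathrm{ad}_{\widehat{\rho}}$ weighted by powers of $\widehat{\rho}^{-1}$, and the order-$\epsilon^2$ coefficient is precisely the doubly indexed summand of \eqref{eq:Sigma}. Three mechanical points close the step: (i) cyclicity of the trace, after the $s$-integrations, collects all resolvents into the single left factor $(-1/\widehat{\rho})^{k_1+k_2+1}$; (ii) the rational weights $\tfrac{1}{(k_1+1)(k_1+k_2+2)}\binom{k_1+k_2}{k_1}$ arise as the values of the monomial integrals over the standard simplices entering the iterated Duhamel expansion behind Paycha's formula; (iii) since $\mathrm{ad}_{\widehat{\rho}}(\widehat{\rho})=0$ and $\mathrm{tr}_s\{\widehat{\rho}^{-k}\,\mathrm{ad}^k_{\widehat{\rho}}(\cdot)\}=0$ for $k\geq1$, replacing $X_{\widehat{\rho}}$ by $L\widehat{\rho}+\widehat{\rho}L^\ast$ inside the $\mathrm{ad}$'s is immaterial in every term where at least one $\mathrm{ad}$ acts, the remaining $k_1=k_2=0$ contribution being handled directly.

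\textbf{Main obstacle.} The crux is the combinatorial step: showing that Paycha's a priori more intricate order-$\epsilon^2$ term collapses exactly to the stated double sum with those coefficients, and that the leftover $k_1=k_2=0$ contribution, together with its $\lambda_t$-dependence, is matched correctly. Alongside this sit the analytic points that must be controlled under the standing faithfulness hypothesis: that $t\mapsto\ln\widehat{\rho}_t$, and hence $\mathsf{S}(\widehat{\rho}_t)$, is a genuine semimartingale and that $\mathsf{S}$ is $C^2$ along the trajectory with the Hessian above (this needs $\widehat{\rho}_t$ bounded away from $0$, which faithfulness provides, at least in finite dimensions or under suitable spectral bounds); and the interchange of the infinite double sum with the trace and with the $s$-integration, which requires absolute convergence. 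The remaining manipulations -- substitution of \eqref{eq:SME}, adjointness of $\mathcal{L}$ and $\mathcal{L}^\star$, and separation of the $dt$ and $dI$ parts -- are routine.
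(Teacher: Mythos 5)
Your route is genuinely different from the paper's. The paper never touches It\^{o} calculus on the continuous equation: it sets up a discrete repeated-interaction model (qubit probes, measurement operators $M_\pm$), derives the discrete filter (\ref{eq:discrete_filter}), and then applies Paycha's formula directly to $f(\rho+\varepsilon)$ with $f(z)=-z\ln z$, where $\varepsilon=\mathcal{L}^\star\widehat{\rho}\,\tau+\sqrt{\eta}(L\widehat{\rho}+\widehat{\rho}L^\ast-\lambda\widehat{\rho})\Delta I_n$; the coefficients $\frac{1}{(k_1+1)(k_1+k_2+2)}\binom{k_1+k_2}{k_1}$ and the factor $(-1/\rho)^{k_1+k_2+1}$ come for free from the explicit derivatives $f^{(m)}(z)=(m-2)!\,(-1/z)^{m-1}$, and the theorem is read off by keeping terms of order $\tau$ using $(\Delta I_n)^2\approx\tau$. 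You instead work in continuous time with the Fr\'{e}chet derivatives of $\mathsf{S}$ and the integral representation $D\ln\rho[Y]=\int_0^\infty(\rho+s)^{-1}Y(\rho+s)^{-1}ds$; your first-order computation (including the cancellation $\mathrm{tr}_s(\rho\,D\ln\rho[X])=\mathrm{tr}_s(X)$) is correct and cleaner than the paper's, and your It\^{o} correction $-\frac{\eta}{2}\int_0^\infty\mathrm{tr}_s(X_{\widehat\rho}(\widehat\rho+s)^{-1}X_{\widehat\rho}(\widehat\rho+s)^{-1})\,ds$ is the right object. What your approach buys is rigour at the level of the limit (no hand-waving about $(\Delta I)^2\to dt$) and a closed-form, manifestly convergent expression for the entropy production; what it defers is precisely the content of (\ref{eq:Sigma}): the identification of that $s$-integral with the double series and its coefficients is the entire theorem, and you only assert that Paycha's order-two term "collapses" to it. In the paper that identification is not a separate lemma to be proved --- it is the starting point --- so as written your proposal still owes the one calculation that the paper's route makes automatic.

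One concrete point where your more careful bookkeeping matters: you correctly note that $\mathrm{ad}^k_{\widehat\rho}(X_{\widehat\rho})=\mathrm{ad}^k_{\widehat\rho}(L\widehat\rho+\widehat\rho L^\ast)$ only for $k\geq 1$, and flag the $k_1=k_2=0$ term as needing separate treatment. The paper glosses over this, replacing $\mathrm{ad}^k_\rho(\varepsilon)$ by $\sqrt{\eta}\,\mathrm{ad}^k_\rho(L\rho+\rho L^\ast)\Delta I_n$ for all $k$ including $k=0$. Carrying your computation through, with $A=L\widehat\rho+\widehat\rho L^\ast$ and $\mathrm{tr}_s(A)=\lambda$, one finds $\mathrm{tr}_s(\widehat\rho^{-1}X_{\widehat\rho}^2)=\mathrm{tr}_s(\widehat\rho^{-1}A^2)-\lambda^2$, so the $k_1=k_2=0$ contribution to the It\^{o} correction differs from the corresponding term of (\ref{eq:Sigma}) by $+\eta\lambda_t^2/2$. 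Either this extra term must be exhibited and shown to cancel against something, or it is a genuine correction to the stated $\Sigma$; your proposal is the one that surfaces the issue, but it does not resolve it.
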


We arrive at the form (\ref{eq:Sigma}) using a noncommutative Taylor series expansion formula due to S. Paycha \cite{Paycha}.

\subsection{Motivation}
Let us first remark that the stochastic master equation \ref{eq:SME} admits a pure state solution provided the following conditions are satisfied: the system is initially separated from its environment and in a pure state; the environment consists of just the detector (in which case the Lindblad generator is precisely $ \mathcal{L} (\cdot ) = - i[ \cdot , H ] + \frac{1}{2}[L^\ast , \cdot ] L + \frac{1}{2} L^\ast [ \cdot , L]$ where $L$ is the collapse operator); and we have perfect efficiency $(\eta = 1)$. Note that this situation generalizes to more than one monitored output field in the obvious way.

In this special situations, we have that $\widehat{\rho}_t \equiv | \psi_t \rangle \langle \psi_t |$ where the pure stochastic state satisfies (see the contribution of J. Kupsch to the standard reference book \cite{Decoherence})
\begin{eqnarray}
d | \psi_t \rangle = - \bigg( iH + \frac{1}{2} (L^\ast L - \lambda_t L + \frac{1}{4} \lambda_t^2 ) \bigg) | \psi_t \rangle \, dt
+ (L - \frac{1}{2} \lambda_t ) \, | \psi _t \rangle \, dI(t) .
\end{eqnarray}
The (conditional) von Neumann entropy, of course, vanishes in these cases.

An interesting special situation is where the collapse operator $L$ is self-adjoint. In particular, the question of whether the stochastic states converge to a pure eigenstates of the collapse operator become natural \cite{ABBH}. For instance, in \cite{SvHM} an ensemble of $N$ spin-1/2 particles was effectively studied through their total spin vector $\mathbf{J}$ and one performed a continuous homodyne monitoring with collapse operator $M\, J_z$ for some coupling constant $M$. Taking $H\equiv 0$ for convenience and assuming perfect detection efficiency, the conditional state will almost surely converge into one of the eigenstates of $J_z$, say $| m\rangle$, with probability $p_m = \langle m | \rho_0 | m \rangle $ in accordance with the Born rule. The conditional entropy should then converge to zero while the unconditioned should of course converge to $-\sum_m p_m \ln p_m$. Rather trivially, Theorem \ref{thm:main} would tell us that the rate of change of the conditional entropy would vanish for $\widehat{\rho}$ diagonal in the basis of the collapse operator. (In \cite{SvHM}, the problem is made more interesting by applying a feedback control through a Hamiltonian $\gamma \, b(t) J_y$ where $b(t)$ depends on the measurement record.)

A related problem arises in the theory of quantum feedback networks \cite{GouJam09a} \cite{GouJam09b}. One may model a quantum system with proportional feedback of homodyne measurement in a fairly straightforward way using the conditional dynamics \cite{Wiseman}. However, since one only needs to know that the measurement is being performed and the output fedback without actually knowing the measurement result, one can alternatively describe the average result coherently - see \cite{GouJam09b} section IV B. Here one is choosing to ignore the measurement record but the fact that a measurement is being made leads to back action - resulting in a unitary model for the system and the detector. This of course changes if one tried to look inside to see what the detector actually detects. An example where this may be of importance is the proposal for autonomous error-correction of a qubit using quantum switches \cite{KNPH}. Here several switches are coupled to the qubit in a coherent quantum feedback network: again, technically there is no measurement but one could potentially consider inserting detectors so the the qubit and the switches are driven by each others homodyne outputs. The set up in \cite{KNPH} is specifically designed so that the switches act as dissipators whenever a particular syndrome error occurs to the qubit. While the set up is completely coherent (therefore unitary), one may naturally ask if the energy dissipated by an error correction could leak out as heat thereby telling us that a error-correction had been made. In this example, the switches are acting as regulators giving a quantum analogue of the flyball governor. Related questions about quantum Maxwell's demon in the setting of superconducting qubits can be found in the work Cottet, et al. \cite{QMD}

\section{Discrete Time Models}

Our proof of Theorem \ref{thm:main} relies on formulating the problem in discrete time and taking the continuum limit. We begin with an exposition of the discrete time situation.

\subsection{State Evolution}
\subsubsection{Unconditional State Evolution}
We consider a discrete time open quantum dynamics with fixed time step $\tau >0$. Our system has Hilbert space $\mathfrak{h}_s$ and at each time $t=n \tau $ it comes in contact with a fresh ancillary system. The ancillas have a Hilbert space which is a copy of a fixed Hilbert space $\mathfrak{h}_a$ with a state which is a copy of a fixed state $\sigma$. The system state updates as
\begin{equation}
    \mathcal{V}_s \, \rho  = \mathrm{tr}_a \bigg( V ( \rho \otimes \sigma ) V^\ast \bigg) 
\end{equation}
where $V$ is a (copy of a) unitary operator on $\mathfrak{h} \otimes \mathfrak{h}_a$ and we have the partial trace over the ancilla. We note that $\mathcal{V}_s$ is a TPCP map.

If we start with system state $\rho_0$ the we obtain the sequence of states $(\rho_n)_n$ where $\rho_n= \mathcal{V}_s \rho_{n-1}$. That is $\rho_n = \mathcal{V}_s^n \rho_0$.


\subsubsection{Conditional State Evolution}
At each time $t=n \tau$ we can measure an observable $Y_n$ of the $n$th ancilla. We assume that $Y_n$ is a copy of a fixed observable $Y$ described by measurement operators $\{ M_y\}$ with $M_y^\ast M_y$ corresponding to the effect that $Y$ is measured to have value $y$: Note that $\sum_y M_y^\ast M_y =I_a$.

If the state of the system at a particular time is $\rho$ then we set
\begin{equation}
    \mathcal{M}_y \rho = \mathrm{tr}_a \bigg(  M_y V ( \rho \otimes \sigma ) V^\ast M_y^\ast \bigg).
\end{equation}
It follows that $\sum_y \mathcal{M}_y \equiv \mathcal{V}_s$. In particular, it is not a trace preserving map and we instead have
\begin{eqnarray}
p(y) \triangleq
    \mathrm{tr}_s ( \mathcal{M}_y \rho ) = \mathrm{tr}_{s+a} \bigg( M_y V (\rho \otimes \sigma ) V^\ast M_y^\ast \bigg)
\end{eqnarray}
and we interpret $p(Y=y)$ as the probability that we measure $Y$ to take the value $y$.
The conditioned state of the system immediately after $Y$ is measured to be $y$ is
\begin{eqnarray}
    \widehat{\rho} (y) = \dfrac{1}{p(y)} \mathcal{M}_y \, \rho .
\end{eqnarray}
One notes that $\sum_y  \widehat{\rho} (y) \,  p(y) = \mathcal{V}_s \rho$.


\subsubsection{Repeated Measurements}
The superoperators $\mathcal{V}_s$ and $\mathcal{M}_y$ describe the one-step processes. We now consider the situation where we initialize our system in state $\rho_0$ and at times $m\tau\, (m=1, \cdots n)$ measure an observable $Y_m$ which is a copy of the fixed $Y$ above, but with a fresh ancilla space. The measurement record will be a sequence of values $\mathbf{y} = (y_1, y_2, \cdots )$. We shall use the notation $ \mathbf{y}[m]$ for the subsequence $(y_1 , \cdots , y_m )$ for finite $m$.

For $n \ge m$, the state of the system at time $t=n\tau$ given the measurement record $ \mathbf{y}[m]$ is given by
\begin{eqnarray}
    \widehat \rho_{n|m} ( y_1, \cdots , y_m ) \triangleq  \dfrac{1}{p(y_1, \cdots , y_m ) } 
		\mathcal{V}_s^{n-m} \circ \mathcal{M}_{y_m} \circ \cdots \circ \mathcal{M}_{y_1} \rho_0  .
\end{eqnarray}
where the probability for the record is
\begin{eqnarray}
    p( y_1 , \cdots , y_m ) = \mathrm{tr}_s \bigg( \mathcal{M}_{y_m} \circ \cdots \circ \mathcal{M}_{y_1} \rho_0 \bigg).
\end{eqnarray}

For convenience, we will write $\widehat \rho_{n|m} ( y_1, \cdots , y_m )$ as $\widehat \rho_{n|m} ( \mathbf{y} )$ and $p( y_1 , \cdots , y_m ) $ as $p_m (\mathbf{y})$ even though the dependence is on $\mathbf{y}[m]$. We also set
\begin{eqnarray}
    \widehat{\rho}_n ( \mathbf{y} ) \triangleq \widehat{\rho}_{n|n} ( \mathbf{y} )
\end{eqnarray}
which is the estimate of the state at time $t=n \tau$ given all measurements up to that time. This is also called the \textit{filtered quantum state} at time $t=\tau n$.

We remark that we may also write
\begin{eqnarray}
    \widehat{\rho}_{n|m} ( \mathbf{y} ) =
      \dfrac{1}{p_m (\mathbf{y}) } \mathrm{tr}_{a^{\otimes n}} \bigg(
      V_n \cdots V_{m+1} M_{y_m} V_m \cdots M_{y_1} V_1  (\rho_0 \otimes \sigma^{\otimes n} )  \, V_1^\ast M_{y_1}^\ast \cdots V_m^\ast M_{y_m}^\ast V_{m+1}^\ast \cdots V_n^\ast \bigg) .
\end{eqnarray}
Here the $j$th measurement operator $M_{y_j}$ acts nontrivially on the $j$th ancilla space, while $V_k$ will couple the $k$th ancilla to the system and act trivially on all other factors. This implies that $[V_k, M_{y_j}]=0$ when $k>j$. This allows us to rewrite this as
\begin{eqnarray}
    \widehat{\rho}_{n|m} ( \mathbf{y} ) =
    \dfrac{1}{p_m (\mathbf{y})  } \mathrm{tr}_{a^{\otimes n}} \bigg(M_{y_m}  \cdots M_{y_1} U_n 
		(\rho_0 \otimes \sigma^{\otimes n} ) U_n^\ast M_{y_1}^\ast \cdots M_{y_m}^\ast \bigg) ,
\end{eqnarray}
where $U_n = V_n \cdots V_1$ (order is important!). We note that $[M_{y_j} , M_{y_k}] =0$ for $j \neq k$.


\subsection{Generic Model}
For definiteness, we take $\mathfrak{h}_a = \mathbb{C}^2$ with basis $|0\rangle =\left[\begin{array}{c}
     0  \\
      1
\end{array}
\right], \,
|1\rangle =\left[\begin{array}{c}
     1 \\
      0
\end{array}
\right]$. This is a qubit probe and we take the probe state before measurement to be $\sigma = | 0 \rangle \langle 0 |  \equiv \left[ \begin{array}{cc}
      0 & 0 \\
    0 & 1
\end{array}
\right] $.

It is convenient to represent operators on $\mathfrak{h}_s \otimes \mathfrak{h}_a$ as $2\times 2$ matrices whose entries are operators on $\mathfrak{h}_s$. In other words,
\begin{eqnarray}
    A\otimes | 1 \rangle \langle 1 | +
    B\otimes | 1 \rangle \langle 0 | +
    C\otimes  | 0 \rangle \langle 1 | +
    D\otimes  | 0 \rangle \langle 0 |
			\equiv
    \left[ \begin{array}{cc}
      A & B \\
    C & D 
\end{array}
\right]
.
\end{eqnarray}

We shall consider a one-step unitary of the form
\begin{gather}
    V = \exp \big\{ -i \tau \, H\otimes I_a \, + \sqrt{\tau}  L\otimes | 1 \rangle \langle 0 |- \sqrt{\tau} L^\ast |0 \rangle \langle 1 | \big\} \nonumber \\
\end{gather}
where $H=H^\ast$ and $L$ are bounded system operators. For our purposes, we will be interested in the small time step limit, $\tau << 1$, and to this end it suffices to work with the expansion
\begin{eqnarray}
    V \approx  \left[ \begin{array}{cc}
      I_s + \tau K & \sqrt{\tau} L \\
    -\sqrt{\tau} L^\ast  & I_s + \tau K
\end{array}
\right]
,
\end{eqnarray}
where $K= - \frac{1}{2} L^\ast L -iH $. We find that, for arbitrary system state $\rho$,
\begin{eqnarray}
    V (\rho \otimes \sigma ) V^\ast \equiv 
    V \left[ \begin{array}{cc}
      0 & 0 \\
    0 & \rho
\end{array}
\right] V^\ast 
\approx
\left[ \begin{array}{cc}
      \tau \, L \rho L^\ast  & \sqrt{\tau} \, L\rho \\
    \sqrt{\tau} \rho L^\ast & \rho + \tau \, (K \rho + \rho K^\ast  )
\end{array}
\right]
.
\end{eqnarray}
The unconditional system evolution is then
\begin{eqnarray}
\mathcal{V}_s \rho &=& \mathrm{tr}_a \bigg( V (\rho \otimes \sigma ) V^\ast \bigg) \approx \rho + \tau (L\rho L^\ast +K\rho +\rho K^\ast ) .    
\end{eqnarray}
Here we introduce the Lindblad generator 
\begin{eqnarray}
\mathcal{L} X = L^\ast XL- \frac{1}{2} L^\ast LX - \frac{1}{2}XL^\ast L -i [X,H]    \end{eqnarray}
and its adjoint $\mathcal{L}^\star \rho = L\rho L^\ast +K\rho +\rho K^\ast $ defined through the duality $\mathrm{tr}_s (\rho \, \mathcal{L}X ) = \mathrm{tr}_s ( \mathcal{L}^\star \rho \, X)$. With this, we may write
\begin{eqnarray}
    \mathcal{V}_s \rho \approx \rho + \tau \, \mathcal{L}^\star \rho .
\end{eqnarray}
The sequence $(\rho_n )$ generated by choosing $\rho_0$ and iterating according to $\rho_n = \mathcal{V}_s \rho_{n-1}$ can be seen as the approximation with time step $\tau$ to the continuous time $\rho (t)$ satisfying the master equation $\dfrac{d \rho}{dt }= \mathcal{L}^\star \rho$.

We now measure the observable $Z = \left[ \begin{array}{cc}
      0 & 1 \\
    1 & 0
\end{array}
\right] $
which has eigenstates $|\pm \rangle = \dfrac{1}{\sqrt{2}}  \left[ \begin{array}{c}
      1 \\
     \pm 1
\end{array}
\right]$. Equivalently, we have the pair of measurement operators
\begin{eqnarray}
    M_\pm =| \pm \rangle \langle \pm |
    \equiv \dfrac{1}{2}
    \left[ \begin{array}{cc}
      1 & \pm 1 \\
     \pm 1 & 1
\end{array}
\right]
\end{eqnarray}
and the associated maps are
\begin{eqnarray}
    \mathcal{M}_\pm \rho \approx \frac{1}{2} \rho  \pm \frac{1}{2} \sqrt{\tau} (L\rho + \rho L^\ast )
\end{eqnarray}
with probabilities $p(\pm 1) \approx \frac{1}{2} \pm \frac{1}{2} \sqrt{\tau} \mathrm{tr}_s (L\rho + \rho L^\ast )$.
In the Heisenberg picture, it is better to think of the measured observable as $Y=\sqrt{\tau}\,  V(I_s \otimes Z) V^\ast$ with $y= \pm \sqrt{\tau}$ as the two outcomes.

The measurements may be repeated leading to a measurement sequence $\mathbf{y}= (y_1, y_2, \cdots )$, say. The filtered quantum state $\widehat \rho_n ( \mathbf{y} ) =\dfrac{1}{p_n(\mathbf{y}) } \mathcal{M}_{y_n} \circ \cdots \circ \mathcal{M}_{y_1} \rho_0 $, satisfies the following difference equation (expanded to leading order in $\tau$)
\begin{eqnarray}
    \widehat{\rho}_n = \widehat{\rho}_{n-1} + \mathcal{L}^\star \widehat{\rho}_{n-1} \tau  + \bigg( L \widehat{\rho}_{n-1} + \widehat{\rho}_{n-1}L^\ast - \lambda_{n-1} \widehat{\rho}_{n-1}\bigg) \Delta I_n
    \label{eq:discrete_filter}
\end{eqnarray}
where $\lambda_{n-1} = \mathrm{tr}_s \big( \rho_{n-1} \, (L+L^\ast ) \big)$ and
\begin{eqnarray}
    \Delta I_n = y_n-y_{n-1} - \lambda_{n-1} \, \tau .
\end{eqnarray}
The terms $\Delta I_n$ are called the \textit{innovations} and have mean zero - in the continuous limit there will be a central limit effect which results in these being increments of a Wiener process: for our purposes, it is enough to note that to the appropriate order
\begin{eqnarray}
    ( \Delta I_n )^2 \approx \tau .
\end{eqnarray}
For a precise statement of this approximation, see Section 5.2 of \cite{BvHJ}.

The continuous limit for (\ref{eq:discrete_filter}) takes the form
\begin{eqnarray}
    d\widehat{\rho} (t) = \mathcal{L}^\star \widehat{\rho}(t) \, dt
    + \big( L \widehat{\rho}(t)+ \widehat{\rho}(t) L - \lambda (t) \widehat{\rho} (t) \big) \, dI(t) .
\end{eqnarray}
where now $\lambda (t) = \mathrm{tr}_s \big( \widehat{\rho} (t) \, (L+L^\ast )\big)$ and $I(t)$ is the limit Wiener process (the \textit{innovations process}). This is the well known form of the stochastic master equation for a homodyne measurement.


\section{Entropy Production}
Let $\rho$ be a quantum state (density matrix) which we write as $\rho= e^{-\Phi}$. With the standard notation $\langle A \rangle_\rho = \mathrm{tr} \big( \rho A \big)$, the von Neumann entropy of a system in state $\rho$ is given by 
\begin{eqnarray}
\mathsf{S}(\rho ) \triangleq \langle \Phi \rangle_\rho = - \mathrm{tr} (\rho \ln \rho )   .
\end{eqnarray}

The Kullback-Liebler divergence, or relative entropy, is $\mathsf{D} (\rho || \sigma ) \triangleq 
\mathrm{tr} \big( \rho \ln \rho - \rho \ln \sigma \big) $. The main properties that we 
shall use are: the \textit{Klein's inequality} which states $\mathsf{D} (\rho || \sigma )  \ge 0$ with equality if and only if $\rho = \sigma$; and the \textit{quantum data processing inequality} which states that $\mathsf{D} (\mathcal{N}\rho || \mathcal{N}\sigma ) \le \mathsf{D} (\rho || \sigma )$ for any trace-preserving CP map $\mathcal{N}$.

The unconditional entropy of the system at time $t=n \tau$ is
\begin{eqnarray}
    S_n = \mathsf{S} (\rho_n).
\end{eqnarray}
The conditional entropy at time $t= n \tau$ given the measurement record $\mathbf{y}[m]$ for $m \le n$ is $  \mathsf{S} \big( \widehat{\rho}_{n|m}  ( \mathbf{y} )\big) $, and we denote its average conditional entropy as
\begin{eqnarray}
    \overline{S}_{n|m} \triangleq \sum_{y_1 , \cdots , y_m} \mathsf{S} \big( \widehat{\rho}_{n|m}  ( \mathbf{y} )\big)  \, p_m (\mathbf{y}  ).
\end{eqnarray}

\begin{definition}
    The Holevo information at time $t=n \tau$ based on the first $m$ measurements is \cite{Holevo}
    \begin{eqnarray}
        H_{n|m} \triangleq S_n - \overline{S}_{n|m} .
        \label{eq:Holevo_info_def1}
    \end{eqnarray}
    The special case $H_n = H_{n|n}$ is just referred to as the Holevo information at time $t=n \tau$.
\end{definition}

\begin{proposition}
    The Holevo information may be written as
    \begin{eqnarray}
     H_{n|m} =\sum_{y_1 , \cdots , y_m}
    \mathsf{D} ( \widehat{\rho}_{n|m}  ( \mathbf{y} )\big) || \rho_n ) \, p_m( \mathbf{y} )  ,
    \label{eq:Holevo_info_def2}
\end{eqnarray}
and, in particular, satisfies $H_{n|m } \ge 0$.
\end{proposition}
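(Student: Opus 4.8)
The plan is to expand the right-hand side of (\ref{eq:Holevo_info_def2}) directly and recognize it as the definition (\ref{eq:Holevo_info_def1}). The one structural fact I would isolate first is that the barycentre of the conditional ensemble is the unconditional state: using the completeness relation $\sum_y \mathcal{M}_y = \mathcal{V}_s$ proved above (the finite-$m$ analogue of the remark $\sum_y \widehat{\rho}(y)\,p(y) = \mathcal{V}_s\rho$),
\begin{eqnarray}
\sum_{y_1,\dots,y_m} p_m(\mathbf{y})\,\widehat{\rho}_{n|m}(\mathbf{y})
= \mathcal{V}_s^{\,n-m}\Big(\sum_{y_1,\dots,y_m}\mathcal{M}_{y_m}\circ\cdots\circ\mathcal{M}_{y_1}\Big)\rho_0
= \mathcal{V}_s^{\,n}\rho_0 = \rho_n .
\end{eqnarray}

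Next I would split the relative entropy into its two traces:
\begin{eqnarray}
\sum_{\mathbf{y}} p_m(\mathbf{y})\,\mathsf{D}\big(\widehat{\rho}_{n|m}(\mathbf{y})\,||\,\rho_n\big)
= \sum_{\mathbf{y}} p_m(\mathbf{y})\,\mathrm{tr}\big(\widehat{\rho}_{n|m}(\mathbf{y})\ln\widehat{\rho}_{n|m}(\mathbf{y})\big)
- \sum_{\mathbf{y}} p_m(\mathbf{y})\,\mathrm{tr}\big(\widehat{\rho}_{n|m}(\mathbf{y})\ln\rho_n\big).
\end{eqnarray}
By definition the first sum is $-\overline{S}_{n|m}$. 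In the second sum $\ln\rho_n$ is independent of $\mathbf{y}$, so linearity of the trace together with the barycentre identity gives $\mathrm{tr}(\rho_n\ln\rho_n) = -S_n$. Hence the right-hand side equals $S_n - \overline{S}_{n|m} = H_{n|m}$, which is exactly (\ref{eq:Holevo_info_def1}).

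Finally, non-negativity is immediate from the formula just obtained: each weight $p_m(\mathbf{y})$ is non-negative and, by Klein's inequality, $\mathsf{D}(\widehat{\rho}_{n|m}(\mathbf{y})\,||\,\rho_n)\ge 0$, so $H_{n|m}$ is a convex combination of non-negative numbers. I do not expect any genuine obstacle here; the only place to be slightly careful is the barycentre identity, which needs the same number $n-m$ of unconditional steps $\mathcal{V}_s$ applied on every measurement branch and the completeness relation $\sum_y\mathcal{M}_y=\mathcal{V}_s$. (Alternatively, one could deduce $H_{n|m}\ge 0$ from the data-processing inequality by regarding $\rho_n$ as the partial trace of the classical–quantum state $\sum_{\mathbf{y}}p_m(\mathbf{y})\,|\mathbf{y}\rangle\langle\mathbf{y}|\otimes\widehat{\rho}_{n|m}(\mathbf{y})$, but the Klein-inequality route is the most economical.)
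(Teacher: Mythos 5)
Your proposal is correct and follows essentially the same route as the paper: split the relative entropy into its two traces, use the completeness relation $\sum_y \mathcal{M}_y = \mathcal{V}_s$ to identify the barycentre $\sum_{\mathbf{y}} p_m(\mathbf{y})\,\widehat{\rho}_{n|m}(\mathbf{y}) = \rho_n$, and conclude with Klein's inequality. The only cosmetic difference is that you isolate the barycentre identity as a preliminary step, whereas the paper carries it out inline.
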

\begin{proof}
We first observe that we may write
\begin{eqnarray*}
     \sum_{y_1 , \cdots , y_m}
    \mathsf{D} \big( \widehat{\rho}_{n|m}  ( \mathbf{y} )|| \rho_n \big) \, p_m( \mathbf{y} ) & =&
    \sum_{y_1 , \cdots , y_m} \bigg\{
    \mathrm{tr} \bigg( \widehat{\rho}_{n|m}  ( \mathbf{y} ) \, \ln  \widehat{\rho}_{n|m}  ( \mathbf{y} ) \bigg)
   	- \mathrm{tr} \bigg( \widehat{\rho}_{n|m}  ( \mathbf{y} ) \, \ln  \rho_{n}  \bigg)
    \bigg\} \, p_m( \mathbf{y} )  \nonumber\\
		&=& - \overline{S}_{n|m} -\sum_{y_1 , \cdots , y_m}  \mathrm{tr} \bigg( \mathcal{V}_s^{n-m}
    \circ \mathcal{M}_{y_m} \circ \cdots  \circ \mathcal{M}_{y_1} \rho_0 \, \ln \rho_{n}  \bigg) \\
    &=& - \overline{S}_{n|m} - \mathrm{tr} \bigg( \mathcal{V}_s^{n}
     \rho_0 \, \ln \rho_{n}  \bigg)\\
    &=& - \overline{S}_{n|m} - \mathrm{tr} \bigg(  
     \rho_n \, \ln \rho_{n}  \bigg) \equiv H_{n|m}.
\end{eqnarray*}
This establishes (\ref{eq:Holevo_info_def2}). Non-negativity then follows from Klein's inequality.
\end{proof}

\begin{definition}
    The incremental Holevo information is defined to be $\Delta H_n = H_n - H_{n-1}$.
\end{definition}

The next result is established by Landi, Paternostro, Belenchia in \cite{LPB_22}.
\begin{proposition}
The incremental Holevo information can be decomposed as
    \begin{eqnarray}
        \Delta H_n = G_n -L_n
        \label{eq:Holevo_info_increment}
    \end{eqnarray}
where 
    \begin{eqnarray}
        G_n  &\triangleq & H_{n|n} - H_{n|n-1},
        \label{eq:gain}\\
        L_n & \triangleq & H_{n-1|n-1} -H_{n|n-1}.
        \label{eq:loss}
    \end{eqnarray}
In particular, $G_n \equiv \overline{S}_{n|n-1}- \overline{S}_{n|n}$ while $L_n$ is strictly positive.

\end{proposition}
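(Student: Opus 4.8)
The plan is to obtain the splitting (\ref{eq:Holevo_info_increment}) as a purely algebraic rearrangement, to read off $G_n$ directly from the definition $H_{n|m}=S_n-\overline{S}_{n|m}$, and to derive the positivity of $L_n$ from the quantum data processing inequality applied to the relative-entropy representation (\ref{eq:Holevo_info_def2}).

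First I would write $\Delta H_n = H_n-H_{n-1}=H_{n|n}-H_{n-1|n-1}$ and insert $\pm H_{n|n-1}$:
\[
\Delta H_n=\big(H_{n|n}-H_{n|n-1}\big)+\big(H_{n|n-1}-H_{n-1|n-1}\big)=G_n-L_n,
\]
which is exactly (\ref{eq:Holevo_info_increment}) with $G_n,L_n$ as in (\ref{eq:gain})--(\ref{eq:loss}). For the identity $G_n=\overline{S}_{n|n-1}-\overline{S}_{n|n}$ I would substitute $H_{n|n}=S_n-\overline{S}_{n|n}$ and $H_{n|n-1}=S_n-\overline{S}_{n|n-1}$ and cancel the common term $S_n$; no further work is needed here.

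For $L_n$ I would use $H_{n|m}=\sum_{\mathbf y}\mathsf{D}\big(\widehat{\rho}_{n|m}(\mathbf y)\,\|\,\rho_n\big)\,p_m(\mathbf y)$, which is finite since each $\widehat{\rho}_{n|m}(\mathbf y)$ has support inside that of the convex combination $\rho_n$. The key structural remark is that declining to measure the $n$-th ancilla is the same as applying the unconditional one-step channel: from the definition of $\widehat{\rho}_{n|m}$ one has $\widehat{\rho}_{n|n-1}(\mathbf y)=\mathcal{V}_s\,\widehat{\rho}_{n-1|n-1}(\mathbf y)$, while $\rho_n=\mathcal{V}_s\,\rho_{n-1}$, and the records summed over (namely $\mathbf y[n-1]$, weighted by the same $p_{n-1}(\mathbf y)$) are identical in $H_{n|n-1}$ and $H_{n-1|n-1}$. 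Since $\mathcal{V}_s$ is TPCP, the data processing inequality gives, term by term,
\[
\mathsf{D}\big(\widehat{\rho}_{n|n-1}(\mathbf y)\,\|\,\rho_n\big)=\mathsf{D}\big(\mathcal{V}_s\widehat{\rho}_{n-1|n-1}(\mathbf y)\,\|\,\mathcal{V}_s\rho_{n-1}\big)\le \mathsf{D}\big(\widehat{\rho}_{n-1|n-1}(\mathbf y)\,\|\,\rho_{n-1}\big),
\]
and summing against $p_{n-1}(\mathbf y)$ yields $H_{n|n-1}\le H_{n-1|n-1}$, i.e. $L_n\ge 0$.

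The only genuinely delicate point — the one I expect to be the main obstacle — is upgrading this to \emph{strict} positivity. Equality in the data processing step would force $\mathcal{V}_s$ to be sufficient (Petz-recoverable) for the family $\{\widehat{\rho}_{n-1|n-1}(\mathbf y)\}_{\mathbf y}$ together with $\rho_{n-1}$, which cannot happen once the measurement genuinely extracts information and $\mathcal{V}_s$ is genuinely dissipative. I would invoke the Petz equality condition for relative entropy to exclude this under the standing faithfulness/non-degeneracy hypotheses, and otherwise record the conclusion as $L_n\ge 0$ with equality confined to the degenerate (information-free or purely unitary) case. Everything else is bookkeeping with the definitions already in place.
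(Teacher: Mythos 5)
Your proof is correct and follows essentially the same route as the paper: the telescoping insertion of $\pm H_{n|n-1}$, the cancellation of $S_n$ to get $G_n=\overline{S}_{n|n-1}-\overline{S}_{n|n}$, and the data processing inequality applied to $\widehat{\rho}_{n|n-1}(\mathbf{y})=\mathcal{V}_s\,\widehat{\rho}_{n-1|n-1}(\mathbf{y})$ and $\rho_n=\mathcal{V}_s\,\rho_{n-1}$ for the sign of $L_n$. Your caveat about strictness is well taken: the paper's own argument likewise only delivers $L_n\ge 0$, so the word ``strictly'' in the statement is not actually earned by either proof, and your remark on the Petz sufficiency condition is the honest way to record what would be needed to close that gap.
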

\begin{proof}
We sketch the argument given in \cite{LPB_22}.
The decomposition (\ref{eq:Holevo_info_increment}) follows from subtracting (\ref{eq:loss}) from (\ref{eq:gain}). Likewise, the form $G_n \equiv \overline{S}_{n|n-1}- \overline{S}_{n|n}$ follows immediately from the definition $H_{n|m} \triangleq S_n - \overline{S}_{n|m}$.

This leaves the positivity of $L_n$ to prove. We note that
\begin{eqnarray*}
    L_n &=&  \sum_{y_1 , \cdots , y_{n-1}}
    \bigg\{ 
    \mathsf{D} \big( \widehat{\rho}_{n-1|n-1}  ( \mathbf{y} ) || \rho_{n-1} \big) 
    - \mathsf{D} \big( \widehat{\rho}_{n|n-1}  ( \mathbf{y} ) || \rho_n  \big)
    \bigg\} \, p_{n-1}( \mathbf{y} ) ,
\end{eqnarray*}
however, using the quantum data processing inequality, we find
\begin{eqnarray*}
    \mathsf{D} \big( \widehat{\rho}_{n|n-1}  ( \mathbf{y} ) || \rho_n \big)
    &=&\mathsf{D} \big( \mathcal{V}_s \widehat{\rho}_{n-1|n-1}  ( \mathbf{y} )  || 
    \mathcal{V}_s \rho_{n-1} \big) \\
    &\leq & \mathsf{D} \big( \widehat{\rho}_{n-1|n-1}  ( \mathbf{y} )  || \rho_{n-1} \big)
\end{eqnarray*}
which leads to the desired result.
\end{proof}

\subsection{Entropy Inequalities}
Let $\rho_{ab}$ be a state with underlying Hilbert space $\mathfrak{h}_a \otimes \mathfrak{h}_b$ and let $\rho_a$ and $\rho_b$ be its marginals with respect to this tensor decomposition, i.e., $\rho_a = \mathrm{tr}_b ( \rho)$ and $\rho_b = \mathrm{tr}_a ( \rho )$. The mutual information is defined to be
\begin{eqnarray}
    \mathsf{I}_{a:b} (\rho) &\triangleq& \mathsf{D} ( \rho || \rho_a \otimes \rho_b ) \nonumber \\
    &\equiv & \mathsf{S}( \rho_a ) + \mathsf{S} (\rho_b ) - \mathsf{S} (\rho ) .
\end{eqnarray}
Clearly the mutual information is never negative.

We now return to our iterative interaction model.
Before the system and $n$th ancilla interact there joint state is $\rho_{n-1} \otimes \sigma$; afterwards, this is $V (\rho_{n-1}\otimes \sigma ) V^\ast$. The marginals of the latter, with respect to the system and ancilla tensor decomposition, are $\mathcal{V}_s (\rho_{n-1}) = \rho_n$ and 
\begin{eqnarray}
    \mathcal{F}_n (\sigma ) \triangleq \mathrm{tr}_s \big( V (\rho_{n-1}\otimes \sigma ) V^\ast \big) ,   
\end{eqnarray}
respectively. 

\begin{proposition}
    The entropy increment $\Delta S_n = S_n - S_{n-1}$ satisfies the inequalties
    \begin{eqnarray}
    \Delta S_n + \mathsf{S} ( \mathcal{F}_n \sigma ) -  \mathsf{S} (  \sigma )  \ge 0,
\end{eqnarray}
and
\begin{eqnarray}
    \Delta S_n +   \mathrm{tr}\big( (\sigma -\mathcal{F}_n \sigma )\, \ln \sigma \big) \ge 0.
\end{eqnarray}
\end{proposition}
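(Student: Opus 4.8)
The plan is to derive both inequalities from the non-negativity of the mutual information of the post-interaction joint state $V(\rho_{n-1}\otimes\sigma)V^\ast$ together with the unitary invariance of the von Neumann entropy. First I would note that, since $V$ is unitary, $\mathsf{S}\big(V(\rho_{n-1}\otimes\sigma)V^\ast\big) = \mathsf{S}(\rho_{n-1}\otimes\sigma) = \mathsf{S}(\rho_{n-1}) + \mathsf{S}(\sigma) = S_{n-1} + \mathsf{S}(\sigma)$, using additivity of entropy on product states. On the other hand, the two marginals of the joint post-interaction state are $\rho_n$ (the system) and $\mathcal{F}_n\sigma$ (the ancilla), as recorded just before the statement. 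Writing out $\mathsf{I}_{s:a}\big(V(\rho_{n-1}\otimes\sigma)V^\ast\big) \ge 0$ in the form $\mathsf{S}(\rho_n) + \mathsf{S}(\mathcal{F}_n\sigma) - \mathsf{S}\big(V(\rho_{n-1}\otimes\sigma)V^\ast\big) \ge 0$ and substituting the computed joint entropy gives $S_n + \mathsf{S}(\mathcal{F}_n\sigma) - S_{n-1} - \mathsf{S}(\sigma) \ge 0$, which is precisely the first inequality once we recognize $\Delta S_n = S_n - S_{n-1}$.

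For the second inequality, the idea is to replace the exact ancilla entropy difference $\mathsf{S}(\sigma) - \mathsf{S}(\mathcal{F}_n\sigma)$ appearing in the first inequality by a one-sided linear bound. I would invoke Klein's inequality (equivalently, non-negativity of the relative entropy $\mathsf{D}(\mathcal{F}_n\sigma \| \sigma) \ge 0$): expanding this gives $\mathrm{tr}\big(\mathcal{F}_n\sigma \ln \mathcal{F}_n\sigma\big) - \mathrm{tr}\big(\mathcal{F}_n\sigma \ln\sigma\big) \ge 0$, i.e. $-\mathsf{S}(\mathcal{F}_n\sigma) \ge \mathrm{tr}\big(\mathcal{F}_n\sigma \ln\sigma\big)$. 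Since $\mathrm{tr}\big(\sigma\ln\sigma\big) = -\mathsf{S}(\sigma)$, this rearranges to $\mathsf{S}(\sigma) - \mathsf{S}(\mathcal{F}_n\sigma) \ge \mathrm{tr}\big(\sigma\ln\sigma\big) - \mathrm{tr}\big(\mathcal{F}_n\sigma\ln\sigma\big) = \mathrm{tr}\big((\sigma - \mathcal{F}_n\sigma)\ln\sigma\big) \cdot(-1)$; being careful with signs, $\mathsf{S}(\mathcal{F}_n\sigma) - \mathsf{S}(\sigma) \ge \mathrm{tr}\big((\sigma - \mathcal{F}_n\sigma)\ln\sigma\big)$. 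Adding $\Delta S_n$ to both sides and combining with the first inequality (which says $\Delta S_n + \mathsf{S}(\mathcal{F}_n\sigma) - \mathsf{S}(\sigma)\ge 0$, so a fortiori the weaker lower bound survives) yields $\Delta S_n + \mathrm{tr}\big((\sigma - \mathcal{F}_n\sigma)\ln\sigma\big) \ge 0$.

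The step requiring the most care is the second one, not because it is deep but because the sign bookkeeping in passing between relative entropy, von Neumann entropy, and the linear functional $\mathrm{tr}(\,\cdot\,\ln\sigma)$ is easy to get wrong; I would write $\mathsf{D}(\mathcal{F}_n\sigma\|\sigma)\ge 0$ explicitly and track each term. One should also note that, strictly, the second inequality is genuinely weaker than the first: because $\ln\sigma$ is concave (operator concave is not needed — only that $x\mapsto x\ln x$ is convex, equivalently Klein), the quantity $\mathsf{S}(\sigma)-\mathsf{S}(\mathcal{F}_n\sigma)$ dominates $\mathrm{tr}\big((\sigma-\mathcal{F}_n\sigma)\ln\sigma\big)\cdot(-1)$, so the second inequality follows from the first plus Klein and offers a computationally more convenient (linear in $\mathcal{F}_n\sigma$) bound, which is presumably why both are recorded. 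No continuum limit or Paycha-type expansion is needed here; these are purely one-step, finite-dimensional statements.
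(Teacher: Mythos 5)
Your strategy is exactly the paper's: the first inequality from non-negativity of $\mathsf{I}_{s:a}\big(V(\rho_{n-1}\otimes\sigma)V^\ast\big)$ together with unitary invariance and additivity of entropy on product states, and the second from combining the first with Klein's inequality applied to $\mathsf{D}(\mathcal{F}_n\sigma\|\sigma)$. The first inequality is handled correctly and needs no comment.

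In the second step, however, the displayed intermediate inequality comes out backwards. From $\mathsf{D}(\mathcal{F}_n\sigma\|\sigma)\ge 0$ you correctly get $-\mathsf{S}(\mathcal{F}_n\sigma)\ge\mathrm{tr}(\mathcal{F}_n\sigma\ln\sigma)$, and since $\mathrm{tr}\big((\sigma-\mathcal{F}_n\sigma)\ln\sigma\big)=-\mathsf{S}(\sigma)-\mathrm{tr}(\mathcal{F}_n\sigma\ln\sigma)$, the consequence is
\begin{eqnarray*}
\mathrm{tr}\big((\sigma-\mathcal{F}_n\sigma)\ln\sigma\big)\;\ge\;\mathsf{S}(\mathcal{F}_n\sigma)-\mathsf{S}(\sigma),
\end{eqnarray*}
which is the \emph{reverse} of the relation you display ($\mathsf{S}(\mathcal{F}_n\sigma)-\mathsf{S}(\sigma)\ge\mathrm{tr}((\sigma-\mathcal{F}_n\sigma)\ln\sigma)$); note also that $\mathrm{tr}(\sigma\ln\sigma)-\mathrm{tr}(\mathcal{F}_n\sigma\ln\sigma)$ equals $\mathrm{tr}((\sigma-\mathcal{F}_n\sigma)\ln\sigma)$ itself, not its negative. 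With your reversed inequality the chain ``add $\Delta S_n$ to both sides and use the first inequality'' does not close: knowing $\Delta S_n+\mathsf{S}(\mathcal{F}_n\sigma)-\mathsf{S}(\sigma)\ge 0$ bounds the \emph{larger} side of your displayed inequality and says nothing about the smaller one. With the corrected direction the argument goes through immediately, $\Delta S_n+\mathrm{tr}\big((\sigma-\mathcal{F}_n\sigma)\ln\sigma\big)\ge\Delta S_n+\mathsf{S}(\mathcal{F}_n\sigma)-\mathsf{S}(\sigma)\ge 0$, which is precisely the paper's two-line proof. Your closing remark actually states the correct dominance (that $\mathsf{S}(\sigma)-\mathsf{S}(\mathcal{F}_n\sigma)$ dominates $-\mathrm{tr}((\sigma-\mathcal{F}_n\sigma)\ln\sigma)$, so the second inequality is the weaker one), so this is a sign slip in the bookkeeping you yourself flagged as delicate, not a conceptual gap; but as written the derivation of the second inequality is invalid and should be repaired as above.
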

\begin{proof}
The mutual information is therefore
\begin{eqnarray}
    \mathsf{I}_{s:a} \big( V (\rho_{n-1}\otimes \sigma ) V^\ast \big) 
		&=& \mathsf{S} (\rho_n ) + \mathsf{S} ( \mathcal{F}_n \sigma ) - \mathsf{S } \big( V (\rho_{n-1}\otimes \sigma ) V^\ast \big) \nonumber \\
    &=& \mathsf{S} (\rho_n ) + \mathsf{S} ( \mathcal{F}_n \sigma ) - \mathsf{S }( \rho_{n-1} ) - \mathsf{S} ( \sigma ).
\end{eqnarray}
Non-negativity of the mutual information then implies the first inequality.

Noting that $ 0 \le \mathsf{D} \big( \mathcal{F}_n \sigma || \sigma \big)  = - \mathsf{S} ( \mathcal{F}_n \sigma ) - \mathrm{tr}\big( \mathcal{F}_n \sigma \, \ln \sigma \big) $ and combining this with the first inequality, we arrive at the second inequality.
\end{proof}

\section{Explicit Formula}

We will make extensive use of the following formal noncommutative Taylor series due to S. Paycha \cite{Paycha}.

\begin{proposition}[Paycha's Formula]
    Let $f$ be analytic in a domain of the complex plane with $k$-th derivative $f^{(k)}$. Then for $\rho , \varepsilon$ noncommutative operators we have 
    \begin{eqnarray}
        f( \rho + \varepsilon ) = \sum_{n=0}^\infty f(n, \rho,\varepsilon ) ,
    \end{eqnarray}
    where the $n$th term in the Paycha expansion is
    \begin{eqnarray}
        f(n, \rho , \varepsilon) = \sum_{k_1, \cdots k_m=0}^\infty\dfrac{f^{(k_1+\cdots+k_n+n)} (\rho )} {(k_1+1)\cdots (k_1 + \cdots +k_n +n)}   \frac{{\mathrm{ad}^{k_1}_\rho (\varepsilon)}}{k_1!}\cdots 
    \frac{{\mathrm{ad}^{k_n}_\rho (\varepsilon)}}{k_n!}
    \end{eqnarray}
    where $\mathrm{ad}_\rho = [\rho , \cdot]$.
\end{proposition}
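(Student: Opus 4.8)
The plan is to reduce the identity to the single resolvent $f(x)=(z-x)^{-1}$ via the holomorphic functional calculus, expand the perturbed resolvent as a Neumann series in $\varepsilon$, bring each term into a ``normal order'' in which all the commutators $\mathrm{ad}_\rho^{k}(\varepsilon)$ stand together and all copies of $(z-\rho)^{-1}$ are collected to one side, and then turn the collected power of $(z-\rho)^{-1}$ back into a derivative of $f$ by Cauchy's formula. Concretely, fix a positively oriented contour $\Gamma$ inside the domain of analyticity of $f$ that encircles $\mathrm{spec}(\rho)$ (and, for $\varepsilon$ small in norm, $\mathrm{spec}(\rho+\varepsilon)$), write $R=R(z)=(z-\rho)^{-1}$, and use $z-\rho-\varepsilon=(I-\varepsilon R)(z-\rho)$ to obtain the Neumann expansion
\[
 f(\rho+\varepsilon)=\frac{1}{2\pi i}\oint_\Gamma f(z)\,(z-\rho-\varepsilon)^{-1}\,dz=\sum_{n=0}^{\infty}\frac{1}{2\pi i}\oint_\Gamma f(z)\,R(\varepsilon R)^{n}\,dz ,
\]
so it suffices to identify the $n$-th summand with $f(n,\rho,\varepsilon)$.

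For the normal ordering I would first note that $[\varepsilon,R]=-R\,\mathrm{ad}_\rho(\varepsilon)\,R$, hence $R\varepsilon=\varepsilon R+R\,\mathrm{ad}_\rho(\varepsilon)\,R$, and iterating this in the trailing resolvent gives $R\,A=\sum_{k\ge0}\mathrm{ad}_\rho^{k}(A)\,R^{k+1}$ for any operator $A$; an induction on $m$ using the hockey-stick identity $\sum_{j=0}^{l}\binom{m-2+j}{j}=\binom{m-1+l}{l}$ then upgrades this to $R^{m}A=\sum_{l\ge0}\binom{m-1+l}{l}\,\mathrm{ad}_\rho^{l}(A)\,R^{m+l}$. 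Applying this rule left to right inside $R(\varepsilon R)^{n}$ — the leftmost $R\varepsilon$ emits $\mathrm{ad}_\rho^{k_1}(\varepsilon)$ followed by a power of $R$, which absorbs the next $R$ and then meets the next $\varepsilon$, and so on — and writing $\sigma_j=k_1+\cdots+k_j$, one arrives at
\[
 R(\varepsilon R)^{n}=\sum_{k_1,\dots,k_n\ge0}\Bigl(\,\prod_{j=1}^{n}\binom{\sigma_j+j-1}{k_j}\Bigr)\,\mathrm{ad}_\rho^{k_1}(\varepsilon)\cdots\mathrm{ad}_\rho^{k_n}(\varepsilon)\,R^{\,\sigma_n+n+1} .
\]

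Since the operator Cauchy formula gives $\tfrac{1}{2\pi i}\oint_\Gamma f(z)\,R^{\,m+1}\,dz=f^{(m)}(\rho)/m!$, integrating the previous display term by term (with $m=\sigma_n+n$) turns the $n$-th summand into a series of terms $\mathrm{ad}_\rho^{k_1}(\varepsilon)\cdots\mathrm{ad}_\rho^{k_n}(\varepsilon)$ weighted by $\bigl(\prod_j\binom{\sigma_j+j-1}{k_j}\bigr)f^{(\sigma_n+n)}(\rho)/(\sigma_n+n)!$, so that all that remains is the numerical identity
\[
 \frac{1}{(\sigma_n+n)!}\prod_{j=1}^{n}\binom{\sigma_j+j-1}{k_j}=\frac{1}{k_1!\cdots k_n!}\cdot\frac{1}{(k_1+1)(k_1+k_2+2)\cdots(\sigma_n+n)} .
\]
Here, since $\sigma_j-k_j=\sigma_{j-1}$, one has $\binom{\sigma_j+j-1}{k_j}=\dfrac{(\sigma_j+j-1)!}{k_j!\,(\sigma_{j-1}+j-1)!}$, so (with $\sigma_0=0$) the product $\prod_j\binom{\sigma_j+j-1}{k_j}$ telescopes to $\dfrac{(\sigma_n+n-1)!}{k_1!\cdots k_n!\,\prod_{j=1}^{n-1}(\sigma_j+j)}$; dividing by $(\sigma_n+n)!$ produces exactly the right-hand side, which completes the identification.

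Once arranged this way the algebra is routine; the step that needs genuine care is the analytic bookkeeping — justifying the interchange of the iterated infinite sums with the contour integral and the convergence of the normal-ordered series, which is legitimate once $\|\varepsilon\|$ is smaller than the distance from $\mathrm{spec}(\rho)$ to $\Gamma$. If one only wants the formula as an identity of formal power series in $\varepsilon$ — which is all that is needed in the sequel — the contour integral can be dispensed with: prove the claim for monomials $f(x)=x^{N}$ by expanding $(\rho+\varepsilon)^{N}$ and normal-ordering with the same commutation rule, then extend to analytic $f$ by linearity and locally uniform convergence of the Taylor series. One should also keep careful track of the side on which the resolvents are collected and of the convention $\mathrm{ad}_\rho=[\rho,\cdot]$: pushing the $\varepsilon$'s consistently past the resolvents in one direction is exactly what renders the coefficients sign-free and fixes whether $f^{(\sigma_n+n)}(\rho)$ ends up to the left or to the right of the string of commutators.
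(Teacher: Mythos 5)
The paper does not actually prove this proposition --- it is quoted verbatim from Paycha's article \cite{Paycha} --- so your argument is supplying a proof rather than paralleling one, and the resolvent/normal-ordering route you take is sound: the Neumann expansion of $(z-\rho-\varepsilon)^{-1}$, the commutation rule $R^{m}A=\sum_{l\ge 0}\binom{m-1+l}{l}\,\mathrm{ad}_\rho^{l}(A)\,R^{m+l}$, and the telescoping of $\prod_{j}\binom{\sigma_j+j-1}{k_j}$ against $(\sigma_n+n)!$ all check out. Two remarks. First, the ordering issue you flag at the very end is not mere bookkeeping to be tracked --- it decides whether the statement is true. Your derivation places $f^{(\sigma_n+n)}(\rho)$ to the \emph{right} of the commutator string, and with the stated convention $\mathrm{ad}_\rho=[\rho,\cdot]$ that is the correct placement: for $f(x)=x^{2}$ the first-order term must equal $\rho\varepsilon+\varepsilon\rho$, which the right-placed version $\varepsilon f'(\rho)+\tfrac{1}{2}\,\mathrm{ad}_\rho(\varepsilon)f''(\rho)$ gives, whereas the left-placed version printed in the proposition gives $f'(\rho)\varepsilon+\tfrac{1}{2}f''(\rho)\,\mathrm{ad}_\rho(\varepsilon)=3\rho\varepsilon-\varepsilon\rho$. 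So you have in fact proved a corrected form of the statement; you should say so explicitly rather than leave the side ambiguous. (The discrepancy is invisible in the paper's application, since every occurrence of the expansion there sits inside a trace: the $n=1$ term survives only at $k_1=0$, and for $n=2$ cyclicity of the trace moves $f^{(k_1+k_2+2)}(\rho)$ to either side.) Second, you conflate two distinct convergence questions: smallness of $\|\varepsilon\|$ relative to $\mathrm{dist}(\Gamma,\mathrm{spec}\,\rho)$ controls the Neumann series, but the normal-ordering series $RA=\sum_{k}\mathrm{ad}_\rho^{k}(A)R^{k+1}$ requires, roughly, $\|\mathrm{ad}_\rho\|\,\|R(z)\|<1$ along $\Gamma$, which has nothing to do with $\varepsilon$ and can fail for a contour hugging the spectrum. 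Your own fallback --- verify the identity for monomials $f(x)=x^{N}$, where every sum is finite and purely algebraic, then extend by linearity or treat the result as a formal series in $\varepsilon$ --- is the clean way to close this and is all that the paper's second-order application needs.
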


We note that $f(0, \rho , \varepsilon) \equiv f (\rho )$ while
\begin{eqnarray}
    f(1, \rho , \varepsilon ) = \sum_{k_1}^\infty
    \dfrac{f^{(k_1+1)} (\rho ) } {(k_1+1)!}
    \mathrm{ad}^{k_1}_\rho (\varepsilon)
\end{eqnarray}
and
\begin{eqnarray}
     f(2, \rho , \varepsilon) = \sum_{k_1, k_2=0}^\infty\dfrac{f^{(k_1+k_2+2)} (\rho )} {(k_1+1) (k_1 + k_2 + 2)}
    \frac{{\mathrm{ad}^{k_1}_\rho (\varepsilon)}}{k_1!}
    \frac{{\mathrm{ad}^{k_2}_\rho (\varepsilon)}}{k_2!}
    .
\end{eqnarray}
Our interest will be restricted to quantum diffusions appearing in homodyne measurement so we will not require the $n>2$ terms. However, this would not be the case for Poissonian processes encountered in photon number counting problems.


\subsection{The Unconditional Entropy}
As a first example of the noncommutative Taylor's expansion, we consider the unconditional entropy $S_n$. Let us set $f(z) =-z \ln z$, then the derivative terms are
\begin{gather}
    f^{(0)} (z) = - z \ln z, \, f^{(1)}(z) = - (1 + \ln z ), \nonumber \\
    f^{(m)}(z) = (m-2)!  ( -\frac{1}{z} )^{m-1} ,\, (m \ge 2).
\end{gather}

The state updates as $\rho_n = \rho_{n-1} + \mathcal{L}^\star \rho_{n-1} \tau$ and we use Paycha's formula with $\rho = \rho_{n-1}$ and $ \varepsilon = \mathcal{L}^\star \rho_{n-1} \tau$. The entropy update is
\begin{eqnarray}
    S_n &\approx& - \mathrm{tr}_s f ( \rho + \varepsilon ) \nonumber\\
    &\approx& - \mathrm{tr}_s f^{(0)} ( \rho ) +
    \sum_{k_1} \frac{1}{ (k_1 + 1)!} \mathrm{tr}_s \bigg( f^{(k_1+1)} (\rho)
    \, \mathrm{ad}^{k_1}_{\rho} (\varepsilon ) \bigg),
\end{eqnarray}
where we need only expand to the first order term. Moreover, we note that only the $k_1 =0$ term gives a non-zero contribution here leading to
\begin{eqnarray}
    S_n &\approx&  S_{n-1} - \mathrm{tr}_s \bigg( \rho_{n-1} \, \mathcal{L} (\ln \rho_{n-1} ) \bigg) \tau.
\end{eqnarray}
Here we have switched to the adjoint generator and used the fact that $\mathcal{L} ( I_s) = 0$.

The continuous limit form of this would be
\begin{eqnarray}
    \frac{d}{dt} S(t) = - \mathrm{tr}_s \bigg( \rho(t) \, \mathcal{L} \big( \ln \rho (t)\big) \bigg) ,
\end{eqnarray}
which agrees with the calculation in, for instance, \cite{Spohn}.

\subsection{Entropy along a Quantum Trajectory}
We now give a proof of Theorem \ref{thm:main}.

We will apply the same method to compute the update rule for $\mathsf{S} \big(\widehat{\rho_n} (\mathbf{y}) \big)$ for a fixed quantum trajectory generated by $\mathbf{y}$. We shall use the update rule (\ref{eq:discrete_filter}) with $\rho =\widehat{\rho}_{n-1}$ and $\varepsilon = \mathcal{L}^\star \widehat{\rho}_{n-1} \tau+ \sqrt{\eta } 
\bigg( L \widehat{\rho}_{n-1} + \widehat{\rho}_{n-1}L^\ast - \lambda_{n-1} \widehat{\rho}_{n-1}\bigg) \Delta I_n$.
This time, the bilinear terms in $\varepsilon$ are not negligible so we have to develop Paycha's formula to second order:
\begin{eqnarray}
    \mathsf{S} \big(\widehat{\rho_n} (\mathbf{y}) \big) 
    &\approx & - \mathrm{tr}_s f^{(0)} ( \rho ) 
		+ \sum_{k_1} \frac{1}{ (k_1 + 1)!} \mathrm{tr}_s \bigg\{ f^{(k_1+1)} (\rho)
    \, \mathrm{ad}^{k_1}_{\rho} (\varepsilon ) \bigg\} \nonumber \\
    &&
    +\sum_{k_1, k_2=0}^\infty
    \mathrm{tr}_s \bigg\{
		\dfrac{f^{(k_1+k_2+2)} (\rho )} {(k_1+1) (k_1 + k_2 + 2)}
    \frac{{\mathrm{ad}^{k_1}_\rho (\varepsilon)}}{k_1!}
    \frac{{\mathrm{ad}^{k_2}_\rho (\varepsilon)}}{k_2!}
    \bigg\} 
    .
\end{eqnarray}
The zeroth term in the Paycha expansion is $\mathsf{S} \big(\widehat{\rho}_{n-1}  (\mathbf{y}) \big) $. Here we note that $\mathrm{ad}^{k}_\rho (\varepsilon ) \approx \sqrt{\eta }  \mathrm{ad}^{k}_\rho ( L \rho + \rho L^\ast )\, \Delta I_n$. The first term in the Paycha expansion simplifies because only the $k_1 = 0$ term survives: all the higher terms involve a trace of a commutator so they vanish identically. This leads to
\begin{eqnarray}
    \mathsf{S} \big(\widehat{\rho_n} (\mathbf{y}) \big) 
    &\approx& \mathsf{S} \big(\widehat{\rho}_{n-1}  (\mathbf{y}) \big) 
    - \mathrm{tr}_s \bigg( \big(1+  \ln  \rho  \big) \varepsilon \bigg) \nonumber \\
    & +& \sum_{k_1, k_2=0}^\infty
    \dfrac{ \binom{k_1+k_2 }{ k_1}  } {(k_1 + 1)  (k_1 + k_2 + 2)  } 
    \mathrm{tr}_s \bigg\{
\big( - \frac{1}{\rho }\big)^{k_1+k_2 +1} 
\mathrm{ad}^{k_1}_\rho (\varepsilon ) \mathrm{ad}^{k_2}_\rho (\varepsilon)
    \bigg\}
    .
\end{eqnarray}

The result is that we obtain
\begin{eqnarray}
    \mathsf{S} \big(\widehat{\rho_n} (\mathbf{y}) \big) 
    &\approx&
    \mathsf{S} \big(\widehat{\rho}_{n-1}  (\mathbf{y}) \big) 	-
    \mathrm{tr}_s \bigg\{  \widehat{\rho}_{n-1} (\mathbf{y}) \mathcal{L} \ln \widehat{\rho}_{n-1} (\mathbf{y})    \big)  \bigg\} \tau \nonumber\\
    &&     -
 \mathrm{tr}_s \bigg\{  \big( L \widehat{\rho}_{n-1} (\mathbf{y})+\widehat{\rho}_{n-1}(\mathbf{y}) L^\ast - \lambda_{n-1}  \widehat{\rho}_{n-1} (\mathbf{y})  \big)
  \ln  \widehat{\rho}_{n-1} (\mathbf{y}) \bigg\} \Delta I_n
    \nonumber \\
   & +& \sum_{k_1, k_2=0}^\infty
     \dfrac{ \binom{k_1+k_2 }{ k_1}  } {(k_1 + 1)  (k_1 + k_2 + 2)  }     \mathrm{tr}_s \bigg\{
\big( - \frac{1}{\widehat{\rho}_{n-1} (\mathbf{y}) }\big)^{k_1+k_2 +1} \nonumber \\
 & &  \times \eta  \mathrm{ad}^{k_1}_{\widehat{\rho}_{n-1} (\mathbf{y}) } ( L \widehat{\rho}_{n-1} (\mathbf{y}) +  \widehat{\rho}_{n-1}(\mathbf{y}) L^\ast ) \mathrm{ad}^{k_2}_{\widehat{\rho}_{n-1} (\mathbf{y}) } ( L \widehat{\rho}_{n-1} (\mathbf{y}) +  \widehat{\rho}_{n-1}(\mathbf{y}) L^\ast)
    \bigg\} \tau
    .
\end{eqnarray}

At this stage, we see that the formula for the rate of change of entropy along a quantum trajectory in the continuous time limit will be (\ref{eq:dS}) with $\Sigma$ given by (\ref{eq:Sigma}).


\section{Conclusion}
In the obtained expression (\ref{eq:dS}) for the rate of change of the entropy along a quantum trajectory, the expression $\Sigma$ is clearly singular in the conditioned state $\widehat{\rho}$. This appears to be an intrinsic feature and faithfulness of the state is needed.

We note that $ \mathrm{ad}_\rho^k ( L \rho +\rho L^\ast ) = \mathrm{ad}_\rho^k ( L )\rho +\rho \, \mathrm{ad}_\rho^k (L^\ast )$ and that in general we may write
\begin{gather}
\mathrm{tr}_s \bigg\{
\big( \frac{1}{\widehat{\rho} }\big)^{\, k_1+k_2 +1} 
\mathrm{ad}^{k_1}_{\widehat{\rho} } ( L \widehat{\rho}  +  \widehat{\rho}  L^\ast ) \,  \mathrm{ad}^{k_2}_{ \widehat{\rho } } ( L \widehat{\rho}  +  \widehat{\rho}  L^\ast) 
    \bigg\}
		=
		\mathrm{tr}_s \bigg\{
 \widehat{\rho} ^{\, -k_1-k_2 -1} \, \mathrm{ad}^{k_1}_{\widehat{\rho} } ( L)  \widehat{\, \rho}^2   \mathrm{ad}^{k_2}_{\widehat{\rho} }   (L^\ast ) \bigg\} \nonumber\\
+ \mathrm{tr}_s \bigg\{  \widehat{\rho} ^{\, -k_1-k_2 } \bigg(  \mathrm{ad}^{k_1}_{\widehat{\rho} } ( L) \widehat{\rho}\, \mathrm{ad}^{k_2}_{\widehat{\rho} }   (L )+  \mathrm{ad}^{k_1}_{\widehat{\rho} } ( L^\ast)  \widehat{\rho} \, \mathrm{ad}^{k_2}_{\widehat{\rho} }  (L^\ast )
\bigg) \bigg\} \nonumber \\
+ \mathrm{tr}_s \bigg\{ \widehat{\rho}^{\, -k_1 -k_2 +1} \mathrm{ad}^{k_1}_{\widehat{\rho} } ( L^\ast) \mathrm{ad}^{k_2}_{\widehat{\rho} }  (L )
    \bigg\}
.
\end{gather}
By inspection, we see that $\Sigma$ is a sum of terms of the form $\mathrm{tr}_s \bigg\{ \widehat{\rho}^{-n+1} L_1 \widehat{\rho}^n L_2 \bigg\}$ for some $n=0,1,2, \cdots$, with various combinatorial coefficients and where either $L_1$ and $L_2$ are equal to $L$ or $L^\ast$.

The main result, Theorem \ref{thm:main}, gives the expression for the rate of change of entropy of the conditional state along a particular quantum trajectory under the assumption that the state is faithful. We give a series expansion for an entropy production rate $\Sigma$ which involves multiple nest commutators of the collapse operators with the conditional state. Our proof utilized a specific repeated qubit-probe model to approximate the measurement however the continuous limit should be universal.

\begin{acknowledgments}
This work is supported by the ANR project “Estimation et controle des syst\'{e}mes quantiques ouverts” QCOAST Projet ANR-19-CE48-0003, the ANR project QUACO ANR-17-CE40-0007, and the ANR project IGNITION ANR-21-CE47-0015.
\end{acknowledgments}


\end{document}